\let\LambdaOLD\Lambda
\renewcommand{\Lambda}{{\bm\LambdaOLD}}
\let\Psiold\Psi
\renewcommand{\Psi}{{\bm\Psiold}}
\let\Sigmaold\Sigma
\renewcommand{\Sigma}{{\bm\Sigmaold}}
\newcommand{\equ}[1]{\begin{equation} #1 \end{equation}}
\newcommand{\lm}{\bm{\Lambda}}
\newcommand{\sss}{\bm{S}}
\newcommand{\psii}{\bm{\Psi}}
\newcommand{\Mu}{\bm{\mu}}
\newcommand{\eps}{\bm{\epsilon}}
\newcommand{\yy}{\bm{y}}
\newcommand{\xx}{\bm{x}}
\newcommand{\tr}{\text{tr}}
\newtheorem{lemma}{Lemma}
\newtcolorbox{mybox}[1][]{
    tikznode boxed title,
    enhanced,
    arc=0mm,
    interior style={white},
    attach boxed title to top center= {yshift=-\tcboxedtitleheight/2},
    fonttitle=\bfseries,
    colbacktitle=white,coltitle=black,
    boxed title style={size=normal,colframe=white,boxrule=0pt},
    #1}
\date{}
\title{\vspace{-0.7in}A Hybrid Mixture of $t$-Factor Analyzers for Robust Clustering of High-dimensional Data}
\author{Kazeem Kareem and Fan Dai\\ {\small Department of Mathematical Sciences, Michigan Technological University, Houghton, Michigan.}}
\begin{document}
 \maketitle
\begin{abstract}
This paper develops a novel hybrid approach for estimating the mixture model of $t$-factor analyzers (MtFA) that employs multivariate $t$-distribution and factor model to cluster and characterize grouped data. The traditional estimation method for MtFA faces computational challenges, particularly in high-dimensional settings, where the eigendecomposition of large covariance matrices and the iterative nature of Expectation-Maximization (EM) algorithms lead to scalability issues. We propose a computational scheme that integrates  a profile likelihood method into the EM framework to efficiently obtain the model parameter estimates. The effectiveness of our approach is demonstrated through simulations showcasing its superior computational efficiency compared to the existing method, while preserving clustering accuracy and resilience against outliers. Our method is applied to cluster the Gamma-ray bursts, reinforcing several claims in the literature that Gamma-ray bursts have heterogeneous subpopulations and providing characterizations of the estimated groups.
\end{abstract}

\noindent%
{\it Keywords:} Model based clustering, 
factor analysis,
matrix-free computation, astrophysical characterization.


\section{Introduction} \label {introduction}
Cluster analysis, a fundamental technique in data analysis, aims to partition a set of objects into distinct groups (clusters) such that objects within the same cluster are more similar to each other than to those in other clusters. This unsupervised learning approach finds applications in diverse fields, including pattern recognition (e.g., character recognition \citep{lecun1998gradient}, image retrieval \citep{sivic2003efficient}), image processing (e.g., image segmentation \citep{comaniciu2002mean}), bioinformatics (e.g., gene expression analysis \citep{eisen1998cluster}, protein clustering \citep{enright2002protein}), and market research (e.g., customer segmentation \citep{wedel2000}). Clustering algorithms can be broadly categorized into two main groups: nonparametric and model-based methods. Nonparametric clustering algorithms, also known as distance-based or heuristic methods, rely on defining a distance or similarity measure between objects and grouping them based on proximity. Popular examples include k-means clustering \citep{macqueen1967}, hierarchical clustering \citep{johnson1967}, DBSCAN \citep{ester1996}, spectral clustering \citep{ng2002}, and affinity propagation \citep{frey2007}. These methods are often computationally efficient and straightforward to implement, but their performance can be sensitive to the choice of distance metric and initial parameters and are lack of probabilistic characterization of the data structure.

Model-based clustering, on the other hand, assumes that the data is generated from a mixture of underlying probability distributions, each representing a different cluster. By fitting a mixture model to the data, these approaches provide a probabilistic framework for cluster assignment and offer insights into the underlying data structure. Notable examples of model-based clustering include Gaussian mixture models \citep{mclachlan2000}, latent class analysis \citep{lazarsfeld1968}, and mixture of factor analyzers \citep{ghahramani1997}. The major tasks of model-based clustering are to estimate the parameters of the group-wise distributions and assign data points to the clusters that are most likely to have generated them, providing a probabilistic inference on clusters that allows for uncertainty estimation and model selection \citep{mclachlan2000, bishop2006, fraleyetal2002}.

Though the advances in model-based clustering have addressed various challenges, these methods can bring new issues. One advance involves robust mixture modeling to handle outliers. For instance, using heavy-tailed distributions, such as the t-distribution, is one approach to mitigate the effects of outliers \citep{mclachlan1987}. While robust to outliers, these models can be more computationally demanding in high-dimensional cases and may require careful initialization. Another advance involves Bayesian methods for model selection and parameter estimation, which provide a principled way to handle uncertainty and avoid overfitting \citep{bishop2006}. However, Bayesian methods become computationally intensive for large datasets and require careful selection of prior distributions. Given the challenges from clustering high-dimensional data, the mixture of factor analyzers (MFA) \citep{mclachlan2000} provides an alternative solution to perform model-based clustering. It combines the dimensionality reduction capabilities of the factor model \citep{Lawley_1940} with the flexibility of mixtures to capture heterogeneity in grouped data. MFA assumes that within each cluster, the high-dimensional data can be represented as a linear combination of a smaller number of latent factors, plus some noise. This allows MFA to effectively model data where the observed variables are correlated within each cluster, while also capturing the differences between clusters. 

Further, to deal with non-Gaussian data, which can significantly affect the performance of MFA, robust estimation methods have been developed, such as the mixture of $t$-distributions with factor analyzers \citep{mclachlan2007}. However, in presence of large scale matrices, the current estimation method for the mixture of $t$-factor analyzers that implements the Expectation-Maximization (EM) algorithms for both the mixture components and the factor model parameters suffers from slow convergence and turns out to be computationally inefficient. 
To address this limitation, we propose a  hybrid mixture of t-factor analyzers approach called MtFAD that embeds a matrix-free computational framework into the EM algorithm to significantly enhance computational speed without jeopardizing
accuracy of the estimation results. We further extend the model to a generalized version by allowing for cluster-specific latent space dimensionality.  

The remainder of this article is organized as follows. Section \ref{methodology} describes the mixture of t-factor analyzers  approach, including model formulation, parameter estimation and model selection, and then illustrates our estimation method, the MtFAD. Section \ref{simulations} demonstrates the effectiveness of our approach compared to the existing method through synthetic simulations. Section \ref{application} applies MtFAD to cluster the Gamma-ray burst dataset. Section \ref{conclusion} provides a summary of our work and discusses possible extensions.

\section{Methodology} \label{methodology}
\subsection{Background and Preliminaries}
 \subsubsection{  Multivariate $t$-Distribution}
Suppose a $p$-dimensional random vector $\yy$ follows the multivariate $t$-distribution $t_p(\Mu, \Sigma, \nu)$ with mean $\bm{\mu}$, scale matrix $\bm{\Sigma}$, and degrees of freedom $\nu$. Then, its probability density function is,

\begin{equation}
f_t(\bm{y}; \Mu, \Sigma, \nu) = \frac{\Gamma\left(\frac{\nu + p}{2}\right)}
{\Gamma\left(\frac{\nu}{2}\right) (\nu \pi)^{p/2} |\bm{\Sigma}|^{1/2}} 
\left( 1 + \frac{1}{\nu}(\bm{y} - \bm{\mu})^T \bm{\Sigma}^{-1} (\bm{y} - \bm{\mu}) \right)^{-\frac{\nu + p}{2}} \label{t-pdf},
\end{equation}
where $\Gamma(\cdot)$ denotes the gamma function.

The multivariate $t$-distribution described above has proven essential in robust statistical modeling \citep{kotz2004multivariate} to describe data with heavier tails compared to the multivariate normal distribution, making it particularly useful in scenarios where the Gaussian assumption is inadequate. The degrees of freedom parameter $\nu$ of a multivariate $t$-distribution governs the heaviness of the tails: as $\nu \to \infty$, the multivariate $t$-distribution converges to the multivariate normal distribution, while as $\nu \to 0$, the distribution exhibits increasingly heavy tails, approaching a form that assigns greater probability mass to extreme values. 

\subsubsection{Mixture of $t$-Factor Analyzers}\label{sub: mtfad}
Suppose $\yy_1, \yy_2,\ldots, \yy_n$ are  $p$-dimensional observed random sample from the $t$-mixture density

\begin{equation}
    f(\yy_i; \Theta) = \sum_{k=1}^K \omega_k f_t (\yy_i; \Mu_k, \Sigma_k, \nu_k),
\end{equation}

 $i = 1, \ldots n$, . Let $u_i\sim  \mathcal{G}\left(\frac{\nu_k}{2},\frac{\nu_k}{2} \right)$ for some $k \in  \{1, \ldots, K\}$, where $\mathcal{G}(\cdot)$ denotes the gamma distribution, and define $z_{ik}$ as the latent component indicator which equals $1$ or $0$ according as $\bm{y}_i$ belongs to the $k$th group with the grouping probability $\omega_k$. Then, it can be shown that \citep{mclachlan2007},
\equ{
\yy_i | u_i, z_{ik}=1 \sim \mathcal{N}_p(\Mu_k, \Sigma_k/u_i),\label{equ:t_distr_to_normal}
}
where $\mathcal{N}_p(\cdot)$ denotes the Gaussian distribution.

Further, the data vector can be characterized via a factor model for a lower-dimensional structure, resulting in a mixture of $t$-factor analyzers (M$t$FA) with,
\begin{equation}
 \yy_i = \Mu_k+ \Lambda_k\xx_{ik}+ \eps_{ik},
\end{equation}
where the latent factors $\xx_{ik} | z_{ik} =1 \sim t_{q}(\bm{0}, \bm{I}_{q}, \nu_k )$,
 $\eps_{ik} |z_{ik}=1 \sim t_p(\bm{0}, \Psi_k, \nu_k)$, and
the joint distribution of  $\xx_{ik}$ and $\eps_{ik}$
 must be defined in a way that aligns with the t-mixture structure assumed for the marginal distribution of the observation vector $\bm{Y_i}$. Consequently, the scale matrices of the components obtain a lower-dimensional representation as $ \Sigma_k = \Lambda_k\Lambda_k^{\top}+\Psi_k$ for each $k= 1, \ldots, K$, where $\Lambda_k$ is a $p\times q$ loading matrix with a common $q$ across all the groups and $q< \min(n, p) $ and $ p+q < (p-q)^2$, which explains the common variances shared by all the $p$ variables of the data points from the $k$th group,  and $\Psi_k$ is a $p\times p$ diagonal matrix of uniquenesses where the diagonal elements represent the unique variances for the $p$ variables of the observations from the $k$th group.
 The log-likelihood is then given by, 
\equ{ \ell= \sum_i\log \left( \sum_{k=1}^K \omega_kf_t(\yy_i;\Mu_k, \Lambda_k\Lambda_k'+ \Psi_k, \nu_k )  \right),\label{mtfa_loglik} } 
 where $f_t$ is the probability density function of a multivariate $t$-variable, as defined in \eqref{t-pdf}.

\citet{mclachlan2007} proposes an alternating expectation-conditional maximization (AECM) algorithm to estimate the parameters of MtFA which is derived as an extension of the approach for estimating mixture of  Gaussian factor analyzers \citep{mclachlanetal2003}. The method essentially merges the EM algorithm for estimating mixture components with an extra EM iteration to estimate the local factor model parameters in each of the resulting components. In the EM steps for mixture components, the data is augmented with the unobserved group indicators $z_{ik}$, and  in the EM steps for factor analyzers, the underlying factors $\xx_{ik}$ are also included in the complete data. The implementation of this AECM algorithm is given in the R package \texttt{EMMIXmfa}, which  provides applicable computational schemes to obtain robust parameter estimation for grouping data with heavy tails. However, this method suffers from slow convergence inherent in EM algorithms, and it can become significantly slow when dealing with high dimensional problems, making it a challenge for the algorithm to scale. In view of this, we propose in the following section a hybrid expectation-conditional maximization (ECM) approach that incorporates matrix-free computations into an EM framework to significantly increase computational speed with minimum memory usage while preserving high estimation accuracy. 


\subsection{A Hybrid ECM Algorithm for MtFA}

As per Section \ref{sub: mtfad}, the conditional distribution of $u_{i}$ is given by,
\begin{equation}
 u_{i} | \yy_i, z_{ik} = 1 \sim \mathcal{G}(m_{1k}, m_{2k}),   
\end{equation}
where
$m_{1k} = (\nu_k + p)/2$, $\quad m_{2k} = (\nu_k + \delta(\yy_i, \Mu_k; \Sigma_k))/2$, and $\delta(\yy_i, \Mu_k; \Sigma_k)= (\yy_i- \Mu_k)^{\top} [\Sigma_k]^{-1}\allowbreak(\yy_i- \Mu_k)$. Employing this result, we describe the ECM steps based on the complete or augmented data with the unobserved $u_i$ and $z_{ik}$.

First, in the expectation (E) step that deals with the unobserved or missing quantities, we compute the conditional expectation of the component indicator $z_{ik}$ for each cluster as follows,
\equ{
\gamma_{ik}=\mathbb{E}[z_{ik}|\yy_i] = \frac{\omega_k f_t(\yy_i ; \Mu_k, \Sigma_k, \nu_k)}{\sum_{k=1}^K \omega_k f_t(\yy_i ; \Mu_k, \Sigma_k, \nu_k)},
}
where again $f_t$ is the probability density function of a multivariate $t$-distribution, and $\Sigma_k=\Lambda_k\Lambda_k^{\top}+ \Psi_k$. Then, we compute the conditional expectation of $u_i$ as,
\equ{
\eta_{ik}=\mathbb{E}[u_i|\yy_i, z_{ik} = 1] = \frac{m_{1k}}{m_{2k}} = \frac{\nu_k  + p}{\nu_k + \delta(\yy_i; \Mu_k, \Sigma_k)}.
}

Next, we construct the conditional maximization (CM) steps that estimate the model parameters from maximizing the complete data log-likelihood function with the expectations computed above. The first part of the CM steps involves finding the estimators of grouping probability $\omega_k$ and the mean parameter $\Mu_k$. We evaluate these quantities by leveraging  \eqref{equ:t_distr_to_normal}, and then obtain their corresponding estimates as follows,
\begin{align}
\hat{\omega}_k &= \frac{\sum_{i=1}^n \gamma_{ik}}{n},\\
\hat{\Mu}_k &= \frac{\sum_{i=1}^n \gamma_{ik} \eta_{ik}  \yy_i}{\sum_{i=1}^n \gamma_{ik}  \eta_{ik}   }.
\end{align}
The second part of the CM steps requires to estimate the two factor model parameters $\Psi_k, \Lambda_k$, and $ \nu_k$ given $\hat{\omega}_k$ and $\hat{\Mu}_k$.  We first obtain 
\equ{\hat{\Sigma}_k = \frac{\sum_{i=1}^n \gamma_{ik} \eta_{ik} (\yy_i - \hat{\Mu}_k)(\yy_i - \hat{\Mu}_k)^{\top}}{\sum_{i=1}^n\gamma_{ik}\eta_{ik}}.}
Then, we adapt the profile likelihood method introduced by \cite{daietal2020} to jointly update $\Psi_k, \Lambda_k$ based on $\hat{\Sigma}_k$. Specifically, the method computes the first largest $q_k$ eigenvalues and the associated eigenvectors of the matrix $\Psi_k^{-1/2}\hat{\Sigma}_k\Psi_k^{-1/2} $ in order to evaluate the profile log-likelihood function adapted from \eqref{eq:profile_likelihood}. The partial decomposition utilizes the Lanczos algorithm with implicit restart proposed by \citet{baglamareichel2005} which involves only matrix-vector multiplications and turns out to be matrix-free. Next, we estimate $\Psi_k$ by optimizing the profile log-likelihood function using the limited-memory Broyden-Fletcher-Goldfarb-Shanno quasi Newton algorithm with box conditions (L-BFGS-B), which approximates the Hessian matrix using vectors and hence, is also matrix-free. The estimator of $\Lambda_k$ is then given by the relation $\hat{\Lambda}_k = \hat{\Psi}_k^{1/2}\bm{V}_{q,k}\bm{\Delta} $ as detailed in Section \ref{sec:supp-meth}.

Lastly, we determine the degrees of freedom, $\nu_k$ from solving the following equation
\begin{equation}
\label{eq:dof equation}
\begin{split}
-\psi\left(\frac{\nu_k}{2}\right) &+ \log\left(\frac{\nu_k}{2} \right) + 1 + \frac{1}{n_k} 
\sum_{i=1}^n \gamma_{ik} \left(\log \eta_{ik} - \eta_{ik}\right) + \psi\left(\frac{\nu_k+p}{2}\right) \\
&- \log\left(\frac{\nu_k+p}{2}\right)  = 0. 
\end{split}
\end{equation}
\eqref{eq:dof equation} has no known analytical solution, but  the unique root can be easily found by numerical method. Our algorithm embedded with the matrix-free computational framework is called as MtFAD (Mixture of $t$-Factor Analyzers for Data).

\subsection{Initialization and Stopping Criteria}
\label{init-stop}
A notable challenge with the EM type algorithms is the tendency to converge to local optima rather than the global maximum \citep{shiremanetal2016}. This issue arises because the EM algorithm is designed to start with a set of initial parameter values and increase the likelihood function iteratively. To mitigate the local convergence, we apply the emEM method proposed by \citet{Biernackietal2003}, which starts the algorithm with a large number of random initializations and run with a few iterations to determine several candidate initials that give the highest data log-likelihood values, then, the algorithm will run with the selected initials until convergence and the optimal estimation results come from the run with highest final data log-likelihood value. Besides, we also include an extra initialization from k-means clustering to ensure that random initials do not perform less than the data-driven one. The algorithm stops when there is no more significant increase in the data log-likelihood value with a tolerance level set to be $10^{-6}$ in practice, or when it reaches the maximum iterations of $500$.

\subsection{Model Selection}
The vast majority of the methods that focus on determining the optimal number of clusters, $K$, and the number of factors, $q$, are based on consideration of the log-likelihood function \citep{fraleyetal2002}. We select the best model by the Bayesian Information Criteria (BIC) \citep{schwarz1978}, where $\text{BIC} = -2\ln \hat{L} + k_p \ln{n} $, with $\hat{L}$ denoting the log-likelihood of the observed data, and $k_p$ representing the total number of parameters to be estimated. The optimal model would correspond to the lowest BIC. To address identifiability challenges in the model, the condition that $\Lambda_k^{\top}
\Psi_k^{-1}\Lambda_k $ is a diagonal matrix is imposed in our algorithm which also reduces the number of parameters to be estimated for each cluster by $q(q-1)/2$ \citep{lawleymaxwell1971}. Hence, we obtain the number of parameters in the BIC as $k_p =2K-1 + Kp + \sum_{k=1}^K(pq+p-q(q-1)/2)$.


\subsection{Numerical Challenges for High-dimensional Data}
One of the main difficulties in high-dimensional model-based clustering is evaluating the likelihood function for mixture models. In high dimensions, certain cluster covariance matrices can become singular, often because the number of data points in a cluster is smaller than the number of features. Consequently, computing the inverse of such covariance matrices, which is essential for the likelihood function, becomes problematic. However, in the context of  mixtures of factor analyzers, this challenge can be resolved by applying the Woodbury identity to $ \Sigma_k = \Lambda_k\Lambda_k^{\top}+\Psi_k$ so that we obtain an alternative formula
\begin{equation}
\Sigma_k^{-1}= \Psi_k^{-1} - \Psi_k^{-1}\Lambda_k(\bm{I}_{q} + \Lambda_k^{\top}\Psi_k^{-1}\Lambda_k )^{-1} \Lambda_k^{\top}\Psi_k^{-1}.
\end{equation}

The determinant is then evaluated as
\begin{equation}
\det(\Lambda_k\Lambda_k^{\top}+\Psi_k)= \frac{\det(\Psi_k)}{ \det(\bm{I}_{q} - \Lambda_k^{\top} (\Lambda_k\Lambda_k^{\top}+\Psi_k)^{-1} \Lambda_k  ) } = \frac{\det(\Psi_k)}{ \det(\bm{I}_{q}+ \Lambda_k^{\top} \Psi_k^{-1} \Lambda_k  ) }.
\end{equation}
These computations would typically be tractable provided that the $n_k$ data points in the $k$th cluster is no less than the number of factors $q$, which can generally be controlled.   

\subsection{A Generalized mixture of $t$-factor analyzers}
Further, we propose a generalized version of the current MtFA by varying the number of factors for different data groups. Consequently, the model is specified with $\xx_{ik} | z_{ik} =1 \sim t_{q_k}(\bm{0}, \bm{I}_{q_k}, \nu_k )$, and $\Lambda_k$, a $p\times q_k$ matrix, where $q_k$ is the number of factors for the $k$th group, so that $\bm{q} = (q_1,\ldots,q_K)$ is the vector of factors. This extension introduces more flexibility to facilitate the characterization of different data clusters and can be easily implemented with a modified version of our current algorithm, which is called as MtFAD-q. The generalized MtFA requires an extra constraint on $q_k$ to guarantee the model estimability as described in the following.
\begin{lemma} Suppose the number of factors for each of the $K$ data groups is $q_k$ and data vector is $p$-dimensional, then we have
$$\max_{k\in \{1 , \ldots, K\} } q_k< p + (1-\sqrt{1+8p})/2. $$ \label{lemma1}
\end{lemma}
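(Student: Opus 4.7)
The plan is to derive the stated upper bound by translating the standard per-cluster estimability constraint into a quadratic inequality in $q_k$ and then solving it. The excerpt already records, in Section \ref{sub: mtfad}, that in the common-$q$ MtFA the loading matrix $\Lambda$ and the diagonal uniqueness matrix $\Psi$ can be identified from the $p\times p$ scale matrix $\Sigma = \Lambda\Lambda^{\top}+\Psi$ only when $p+q<(p-q)^2$; this is precisely the condition that the number of free parameters $pq+p-q(q-1)/2$ in the constrained factor model does not exceed $p(p+1)/2$, the number of distinct entries of a symmetric $p\times p$ matrix. In the generalized model each cluster has its own loading matrix $\Lambda_k\in\mathbb{R}^{p\times q_k}$ and its own diagonal $\Psi_k$, and these are estimated from the group-specific scale matrix $\hat{\Sigma}_k$, so the same identifiability condition must hold per cluster, namely $p+q_k<(p-q_k)^2$ for every $k\in\{1,\ldots,K\}$.

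Next, I would expand this inequality to obtain the quadratic in $q_k$,
\begin{equation*}
q_k^{2}-(2p+1)q_k+(p^{2}-p)>0,
\end{equation*}
and apply the quadratic formula to locate its two roots. A short computation gives the discriminant $(2p+1)^{2}-4(p^{2}-p)=8p+1$, so the roots are $q_k=p+(1\pm\sqrt{1+8p})/2$. Since the leading coefficient of the quadratic is positive, the inequality is satisfied precisely outside the interval determined by the two roots. The larger root $p+(1+\sqrt{1+8p})/2$ strictly exceeds $p$, and since any factor model by construction has $q_k\le p$, the only operative bound is the smaller root, yielding $q_k<p+(1-\sqrt{1+8p})/2$.

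Finally, since this bound holds simultaneously for every $k$, taking the maximum over $k\in\{1,\ldots,K\}$ produces the claimed inequality $\max_{k}q_k<p+(1-\sqrt{1+8p})/2$. The main obstacle, such as it is, is justifying that the per-cluster condition $p+q_k<(p-q_k)^2$ is in fact the right constraint to impose in the mixture setting; this is a straightforward observation because the M-step for $(\Lambda_k,\Psi_k)$ operates on $\hat{\Sigma}_k$ alone and hence reduces to a single-group factor identifiability problem with parameter $q_k$. Everything else is an elementary algebraic manipulation, so no substantive technical difficulty is anticipated.
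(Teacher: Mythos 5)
Your proof is correct and follows essentially the same route as the paper's: count the free parameters $pq_k+p-q_k(q_k-1)/2$ of the constrained per-cluster factor model, require the reduction relative to an unrestricted scale matrix to be positive, and solve the resulting quadratic $(p-q_k)^2>p+q_k$ for $q_k$. You are in fact slightly more careful than the paper, which writes $\tfrac{1}{2}p(p-1)$ where the standard count $\tfrac{1}{2}p(p+1)$ (which you use) is needed for the stated algebra to go through, and which does not explicitly discard the larger root of the quadratic as you do.
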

\begin{proof}
See Section  \ref{proof_of_lemma}.
\end{proof}

\section{Performance Evaluations} \label{simulations}
\subsection{Comparison Study between MtFAD and EMMIXt}
\label{comparisions}
We first evaluate the performance of our algorithm MtFAD with comparisons to the existing AECM algorithm introduced in Section \ref{sub: mtfad}, which is referred to as EMMIXt. The grouped datasets are generated with different clustering complexities determined by a generalized overlap rate \citep{melnykovetal2012,maitraandmelnykov2010} $\bar{\omega} = 0.001,0.005, 0.01$, where a smaller $\bar{\omega}$ implies more separations between the data clusters, which is illustrated by Figure \ref{radviz:figure_mtfa}.
\begin{figure}[h!]
    \centering
    \begin{minipage}[b]{0.31\textwidth}
        \includegraphics[width=\textwidth]{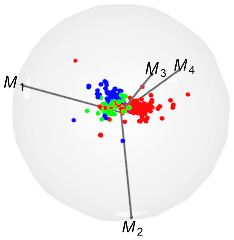}
        \caption*{$\bar{\omega}=$ 0.001 }
    \end{minipage}
    \hfill
    \begin{minipage}[b]{0.32\textwidth}
        \includegraphics[width=\textwidth]{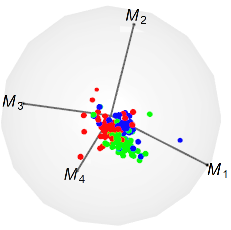}
        \caption*{$\bar{\omega}=$ 0.005 }
    \end{minipage}
    \hfill
    \begin{minipage}[b]{0.34\textwidth}
        \includegraphics[width=\textwidth]{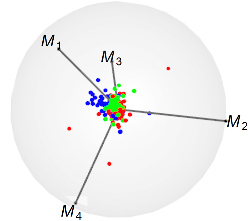}
        \caption*{$\bar{\omega}=$ 0.01 }
    \end{minipage}
    \caption{3D displays of the MtFA datasets ($n=300, p=12, K=3, \bm{q}=(2,3,4), \bm{\nu}= (2,3,3)$) with varying overlap rates. Plots were generated using the 3D radial visualization tool developed by \cite{Zhuetal2022}. }
    \label{radviz:figure_mtfa}
\end{figure}

To simulate the MtFA datasets, we first generate grouped Gaussian data $\yy^*_k$ using the R package of \texttt{MixSim} \citep{maitraandmelnykov2010} with a given $\bar{\omega}$, values of $\Mu_k$ and $\Lambda_k$ from standard normals, values of $\Psi_k$ from $\mathrm{Unif}(0.2,0.8)$, and the grouping probabilities $\omega_1,\omega_2,\ldots,\omega_K$ from standard normals and scaling the absolute values to have a sum of 1. Then, we generate random variable $s$ from the $\chi^2$ distribution with randomly generated degrees of freedom $\nu_k$, and finally, obtain the $t$-mixture data from the $k$th group as $\yy^*_k/\sqrt{s/\nu_k}+\Mu_k$ (The relation is explained in \cite{mclachlan2007}). 100 MtFA datasets were randomly generated under the settings of $n=300, p=10$ for all combinations of the true numbers of factors and clusters $q=2,3$ and $K=2,3$. Then, we applied MtFA and EMMIXt with the same initialization method and stopping rules introduced in Section \ref{init-stop} to fit the simulated datasets with $K_0$ ranging from 1 up to $2K$ and $q_0$  from 1 up to $2q$, respectively. All experiments were done using R \citep{R} on the same machine.

Table \ref{tab:mtfad BIC correctness uniform q} compares the model correctness rate that is computed as the percentage of runs at which the BIC chooses an optimal model with correct $K_0=K$ and $q_0=q$, where we can see that both MtFAD and EMMIXt exhibit almost remarkable abilities to choose the correct model. For the corrected fitted models, we evaluate the similarity between the true and estimated clusters using the adjusted rand index (ARI) \citep{melnykovetal2012}, which is shown by Figure \ref{ari_plots_mtfa} and we can see that both MtFAD and EMMIXt achieved high and almost identical clustering accuracy across all the simulation settings. Similar patterns are also shown for parameter estimations, where we evaluate the accuracy of estimates compared to the true values by the relative Frobenius distance, for example, for $\Lambda_k\Lambda_k^\top$ instead of $\Lambda_k$ due to the identifiability, the relative distance is computed as $d_{\Lambda_k\Lambda_k^\top} = \|\hat{\Lambda}_k\hat{\Lambda}_k^\top - \Lambda_k\Lambda_k^\top\|_F/\|\Lambda_k\Lambda_k^\top\|_F$, which are displayed in Figures \ref{fig:frobenius-error-lambda} and \ref{fig:frobenius-error-mu}.
\begin{figure}[h!]
    \centering
    \includegraphics[width=1\textwidth]{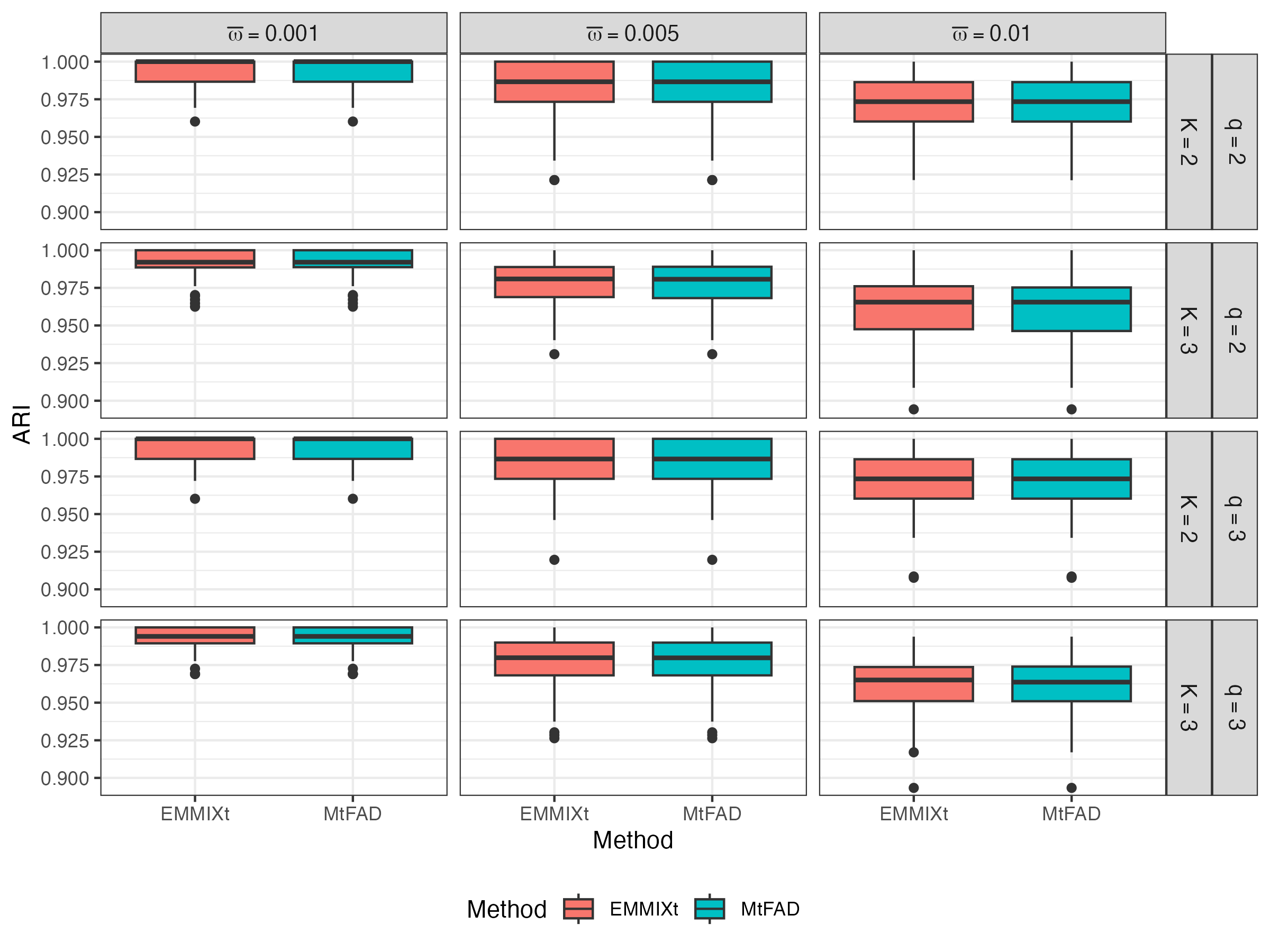} 
    \caption{Boxplots of ARI values fitted from MtFAD and EMMIXt for $n=300$ and $p=10$.}
    \label{ari_plots_mtfa}
\end{figure}

\begin{figure}[!ht]
  \centering
\mbox{
\subfloat[$K=2$]{
\includegraphics[width=0.475\linewidth]{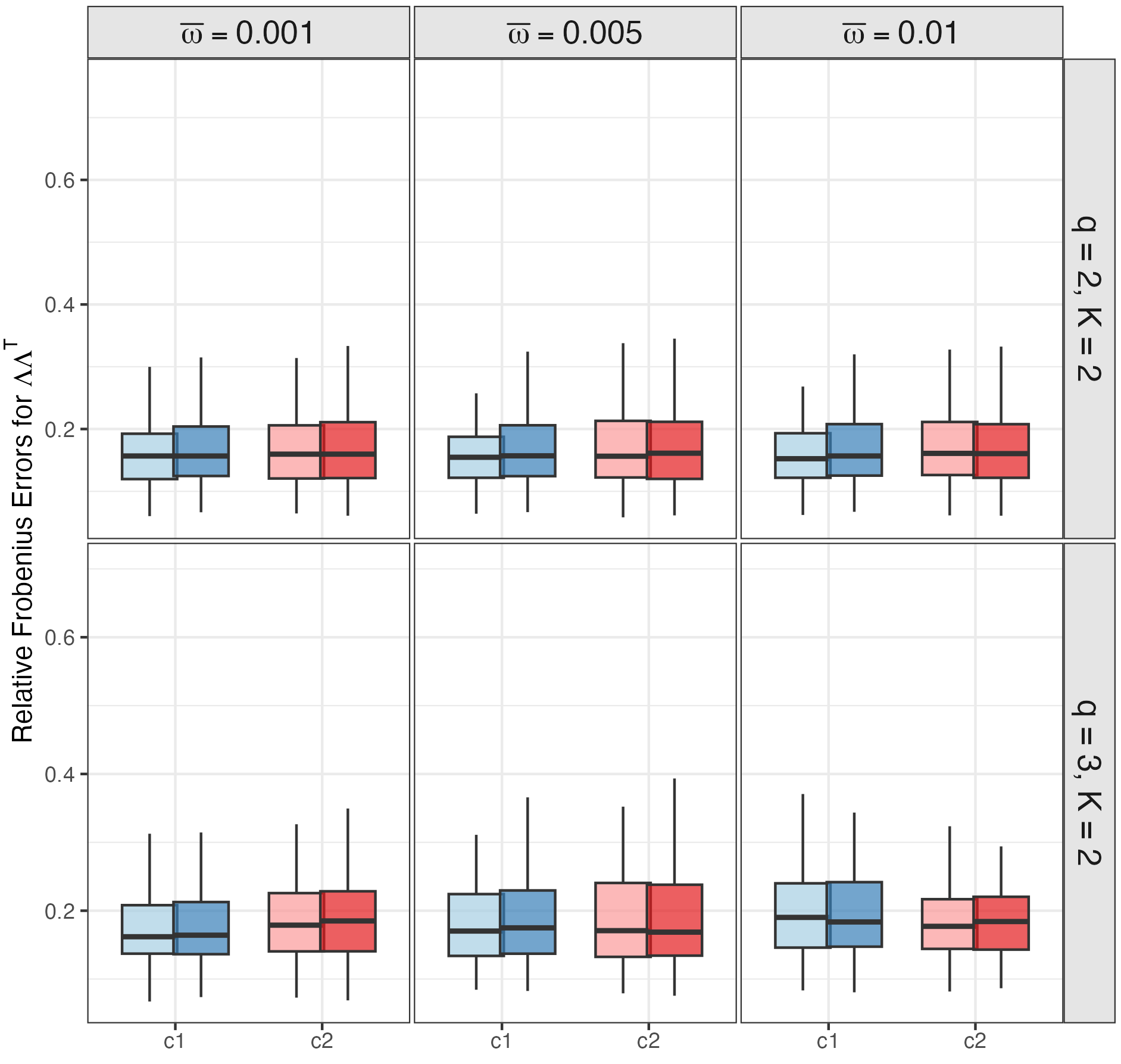}
}
}%
\mbox{
\subfloat[$K=3$]{
\includegraphics[width=0.475\linewidth]{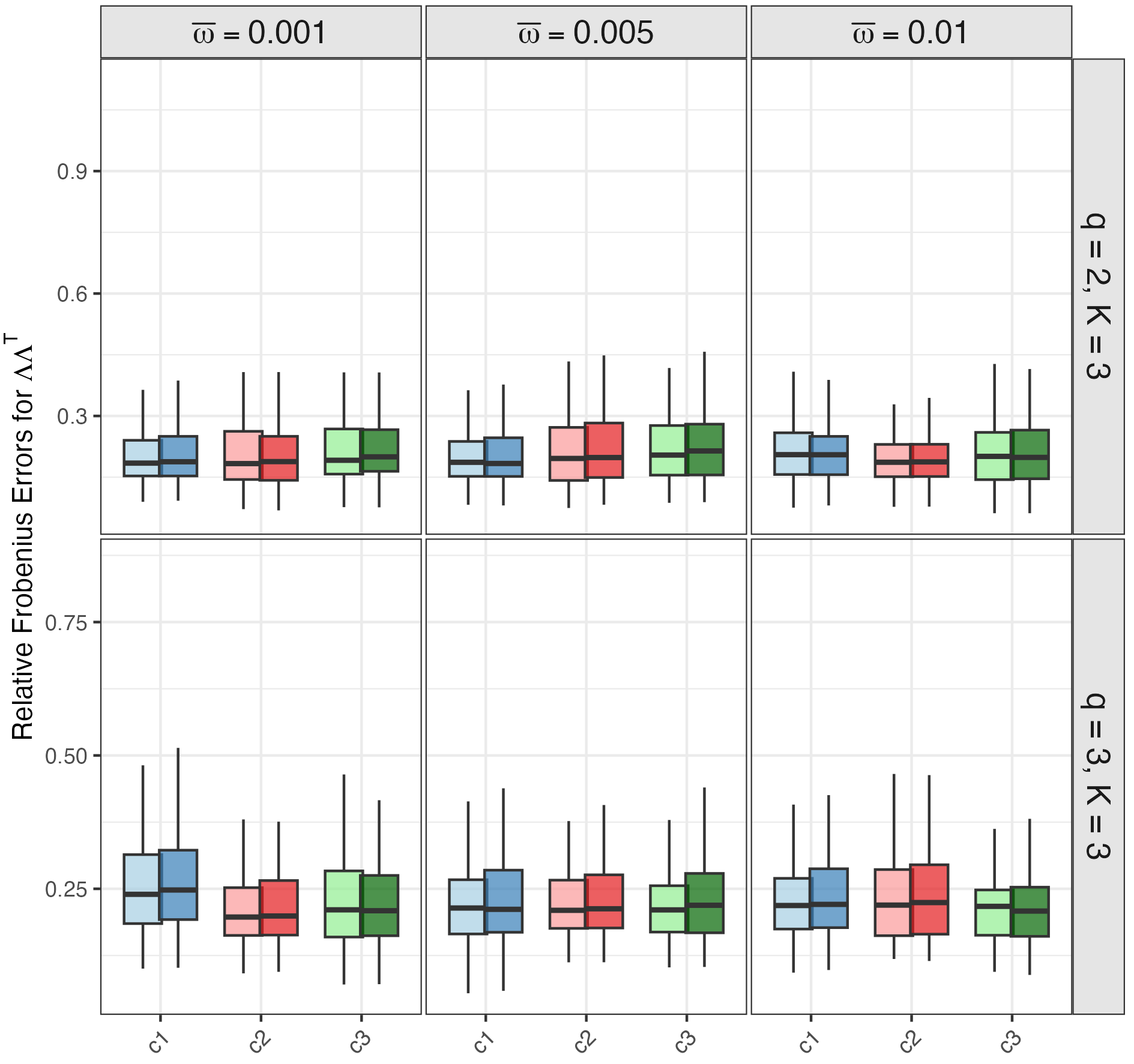}
}
}\\
\mbox{
\subfloat[$K=2$]{
\includegraphics[width=0.475\linewidth]{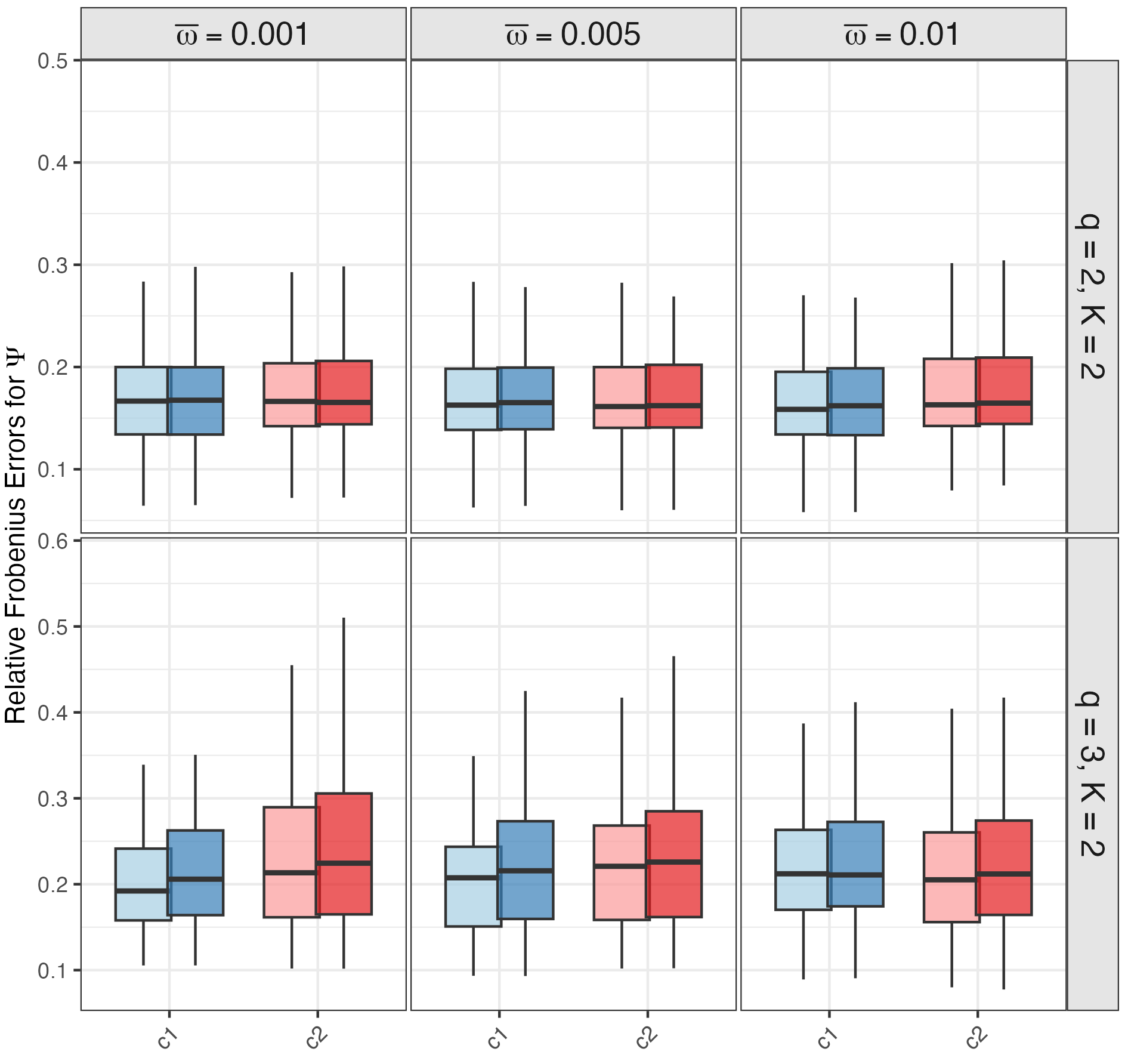}
}
}%
\mbox{
\subfloat[$K=3$]{
\includegraphics[width=0.475\linewidth]{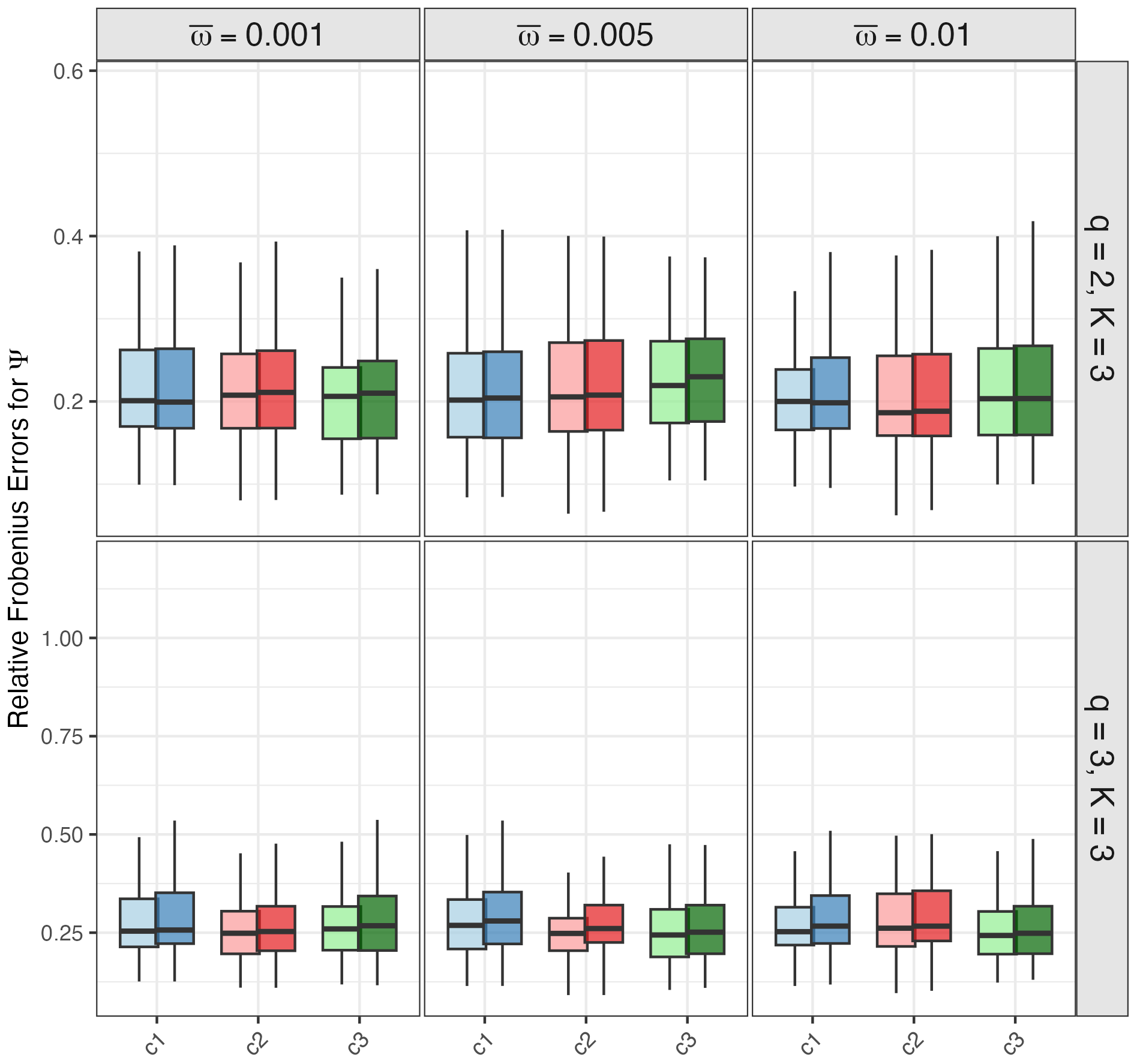}
}
}
\caption{Relative Frobenius distances  of the covariance parameters $\Lambda_k\Lambda_k^\top$ and $\Psi_k$ fitted from MtFAD and EMMIXt for $n=300$ and $p=10$ from the respective true paramters. Dark and light shades represent MtFAD and EMMIX respectively. Different colors depict different components.}
    \label{fig:frobenius-error-lambda}
\end{figure}

\begin{table}[h]
    \centering
   \caption{Correctness rates among the 100 simulation runs fitted with MtFAD and EMMIXt for $n=300,p=10$.}
    \begin{tabular}{c c  c c  c c  c c}
        \toprule\toprule
        \multicolumn{2}{c}{} & \multicolumn{2}{c}{$\bar{\omega} = 0.001$} & \multicolumn{2}{c}{$\bar{\omega} = 0.005$} & \multicolumn{2}{c}{$\bar{\omega} = 0.01$} \\
        $K$ & $q$ & MtFAD & EMMIXt & MtFAD & EMMIXt & MtFAD & EMMIXt \\
        \midrule
        2 & 2 & 0.98 & 0.98 & 0.97 & 0.98 & 0.99 & 0.99 \\
        2 & 3 & 0.98 & 0.99 & 1 & 1 & 1 & 1 \\
        3 & 2 & 0.92 & 0.98 & 1 & 1 & 0.98 & 1 \\
        3 & 3 & 1 & 1 & 0.99 & 1 & 1 & 1 \\
        \bottomrule
    \end{tabular}
    \label{tab:mtfad BIC correctness uniform q}
\end{table}
With the clustering and estimation accuracies being preserved, our MtFAD outperforms EMMIXt in terms of the computational time. Figure  \ref{time_speedup_mtfa-p10} shows the relative speed of MtFAD to EMMIXt for the correct models, which is computed as the ratio of the CPU runtime of EMMIXt to that of MtFAD, where we can see that MtFAD exhibits significant time speedup compared to EMMIXt. To further investigate the performance of MtFAD for high-dimensional data, we simulated MtFA datasets with $n=p=150$. For convenience, the overlap rate is not prefixed in this case. Figure \ref{fig:runtimes-speedup-np150-mtfa} displays the runtime and relative speed of MtFAD to EMMIXt under this high-dimensional setting, where MtFAD shows higher efficiency with increasing $p$ and meanwhile, maintains a desirable clustering and estimation accuracies as presented in Figure \ref{frobenius-error-p150}.

\begin{figure}[h!]
    \centering
\includegraphics[width=0.8\textwidth]{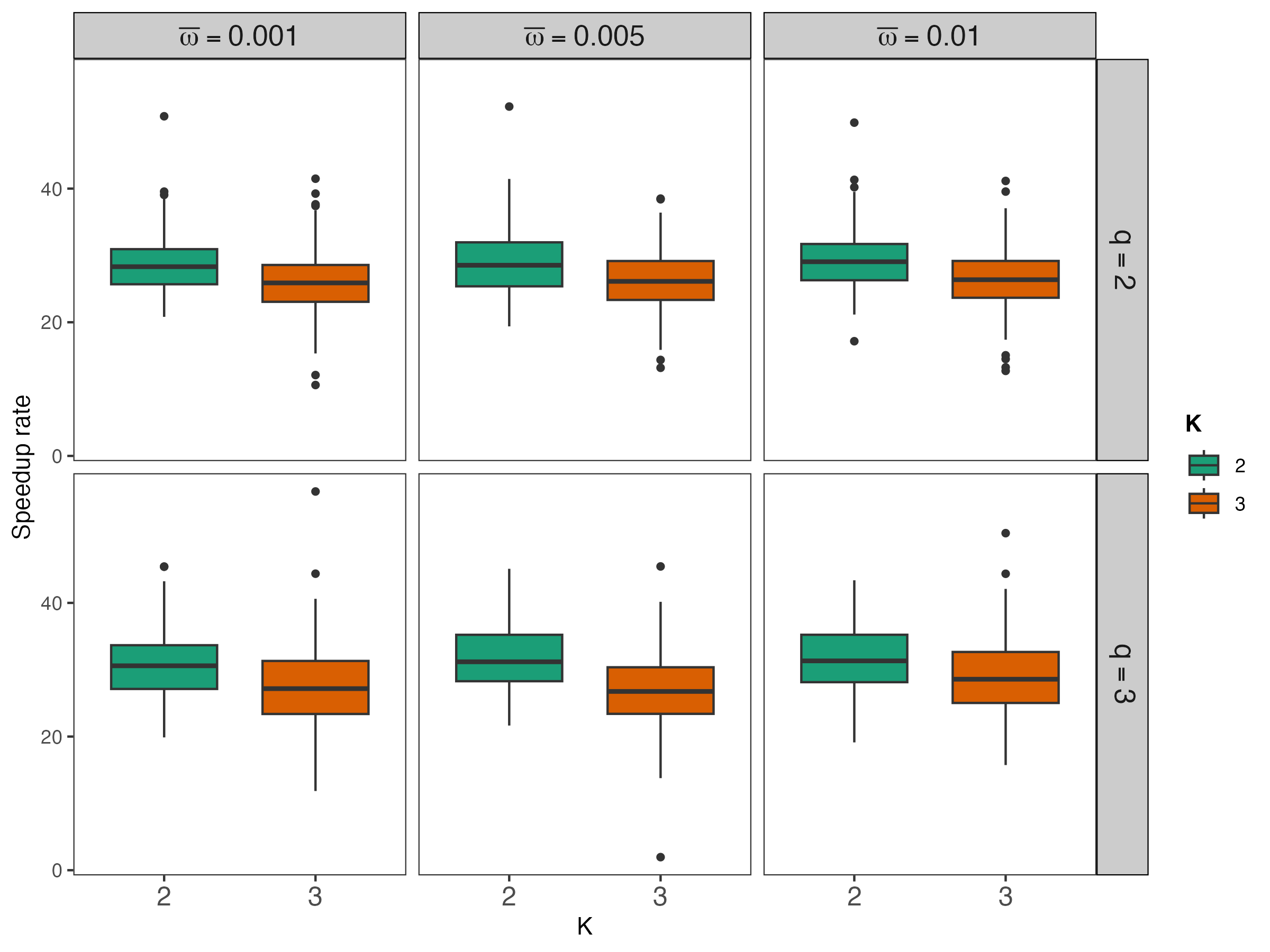} 
    \caption{Boxplots of the time speedup of M$t$FAD relative to EMMIX$t$ for $n=300,p=10$.}
    \label{time_speedup_mtfa-p10}
\end{figure}

\begin{figure}[!ht]
  \centering
\mbox{
\subfloat[Runtime for MtFAD and EMMIXt]{
\includegraphics[width=0.485\linewidth]{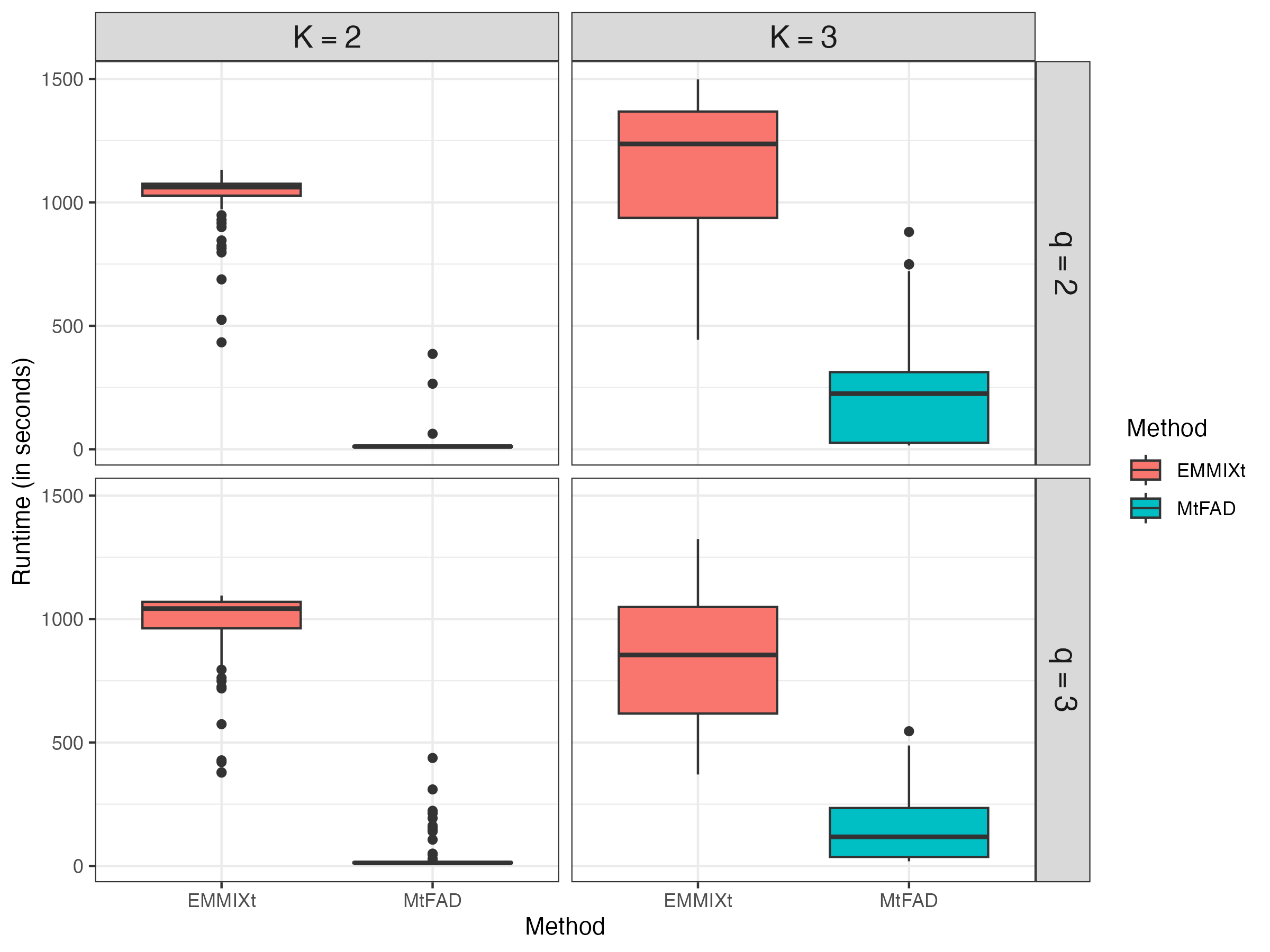}
}
}%
\mbox{
\subfloat[Relative speed of MtFAD to EMMIXt]{
\includegraphics[width=0.485\linewidth]{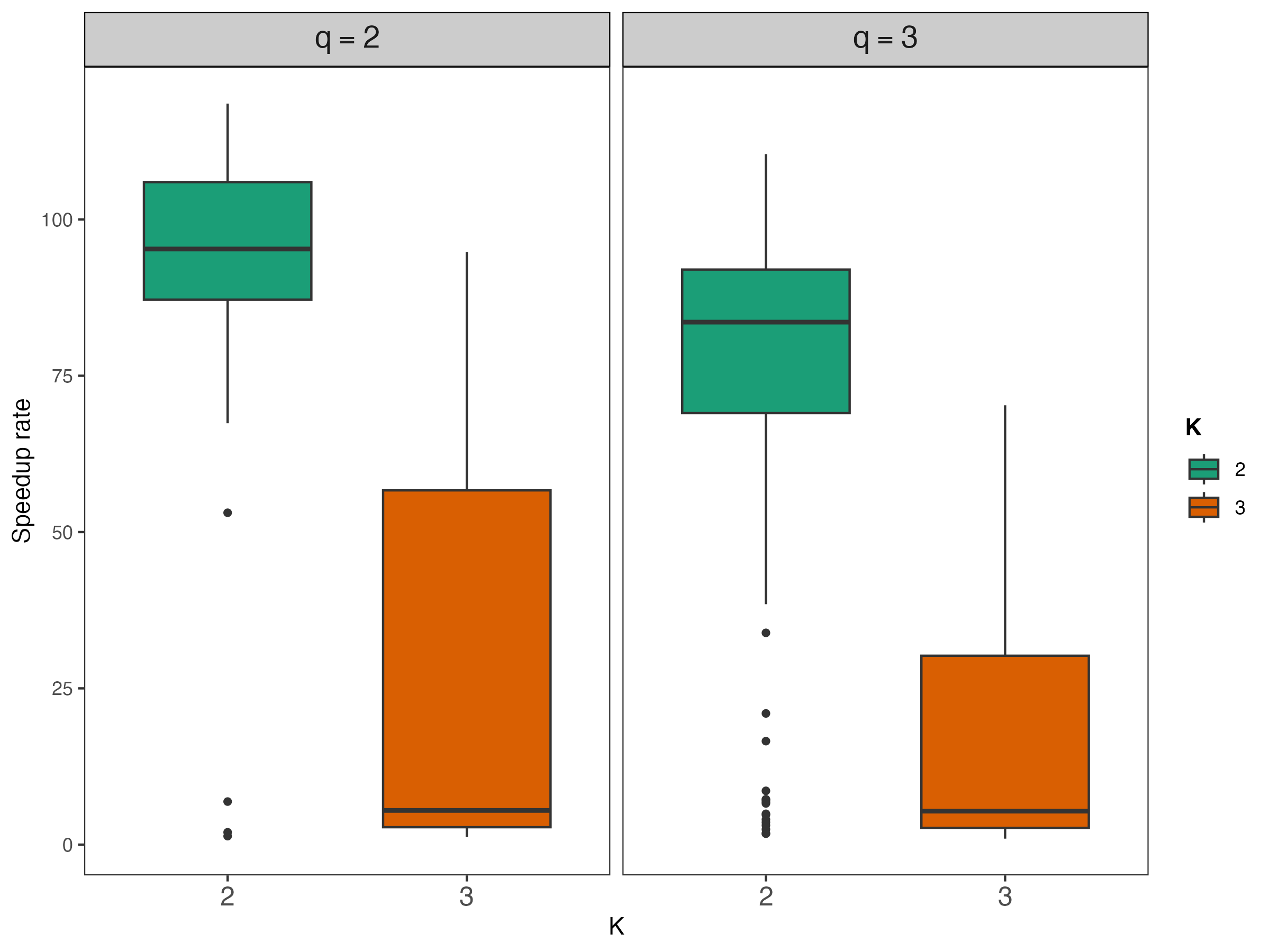}
}
}
\caption{Plots of (a) runtime and (b) relative speed of MtFAD compared to EMMIXt for data with $n=p=150$ and all combinations of $q=2,3$ and $K=2,3$. }
    \label{fig:runtimes-speedup-np150-mtfa}
\end{figure}

\subsection{Simulation Study for MtFAD-q}
 In the case of varied number of factors for different clusters, we note that the correct model is invariant with respect to the permutation of the entries of $\bm{q} = (q_1, \ldots, q_{K})$ and the corresponding permutation of the entries of $\bm{\nu} = (\nu_1,  \ldots, \nu_{K})$, because a permutation of both the entries of $\bm{q}$ and $\bm{\nu}$ can be considered a consequence of the same permutation of the order of the components in a $t$-mixture population. Therefore, the correctness rate for MtFAD-q is defined as the proportion of the runs that the BIC chooses the optimal number of components as the true $K$ and the optimal number of factors as the true $\bm{q}$, up to a permutation of the entries of $\bm{q}$.  We generated 100 samples from the generalized mixture of $t$-factor analyzers with $n=300$, $p=10$, $K = 2$, and $\bm{q} \in$ $\{ (2,3), (2,4) \} $ following the similar strategies used in Section \ref{comparisions}.
 Figure \ref{fig:correctness-mtfad-q} gives the correctness rates and it can be observed that as the clustering complexity ($\bar{\omega}$) increases, MtFAD-q maintains the ability to produce the optimal fitting at the correct model across various kinds of settings. 
\begin{figure}[h!]
    \centering
    \includegraphics[width=0.95\linewidth]{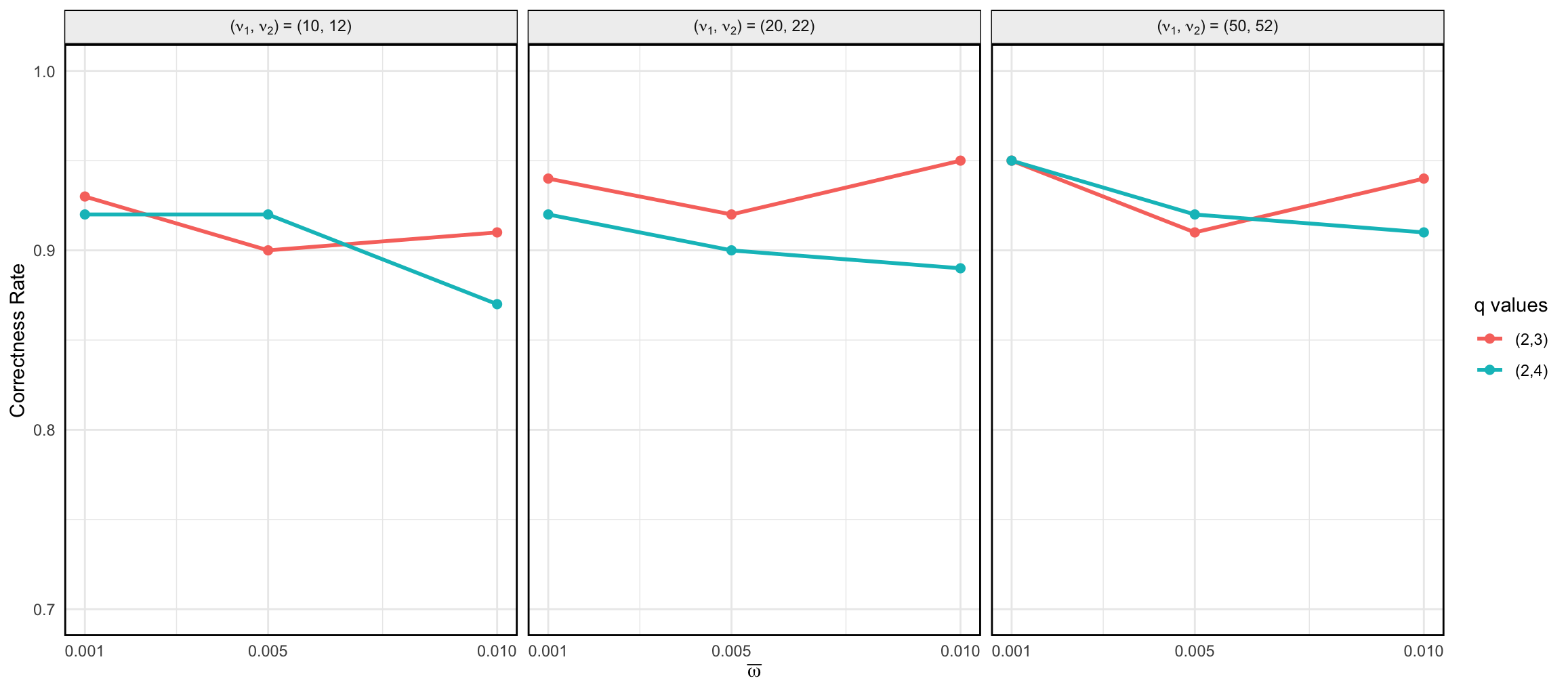}
    \caption{Correctness rates of MtFAD-q for generalized MtFA simulations.}
    \label{fig:correctness-mtfad-q}
\end{figure}

\section{Clustering Gamma-ray Bursts } \label{application}
Gamma-ray bursts (GRBs) are among the most energetic explosions in the universe, releasing vast amounts of energy in short durations, typically ranging from milliseconds to several minutes \citep{piran2004}. They are traditionally classified into long and short duration groups, with the former linked to the collapse of massive stars and the latter associated with compact object mergers \citep{berger2014}. The clustering of GRBs can provide critical insights into high-energy astrophysical processes and the evolution of the universe \citep{kumar2015}, and recent studies have argued that GRBs belong to distinct groups with more than two subpopulations \citep{shahmoradietal2015,chattopadhyayetal2017,chattopadhyayetal2018}. We consider the GRB dataset from the Burst and Transient Source Experiment (BATSE) 4Br catalog studied by \cite{chattopadhyayetal2017}. The observed 1599 GRBs were measured with nine astrophysical features: $T_{50}$ and $T_{90}$, the times by which $50\% $ and $90\%$ of the flux arrive; $P_{64}$, $P_{256}$, and $P_{1024}$, the peak fluxes measured in bins of $64$, $256$, and $1024$; $F_{1}$, $F_{2}$, $F_{3}$, and $F_4$, the four time-integrated fluences in the spectral channels of $20-50$ keV, $50-100$ keV, $100-300$ keV and $>300$ keV. 

Figure \ref{corr_matrix_grb} displays the density and correlation plots of the data which reveals the presence of extreme values for individual features. 
\begin{figure}[h!]
    \centering
           \includegraphics[width=1\textwidth]{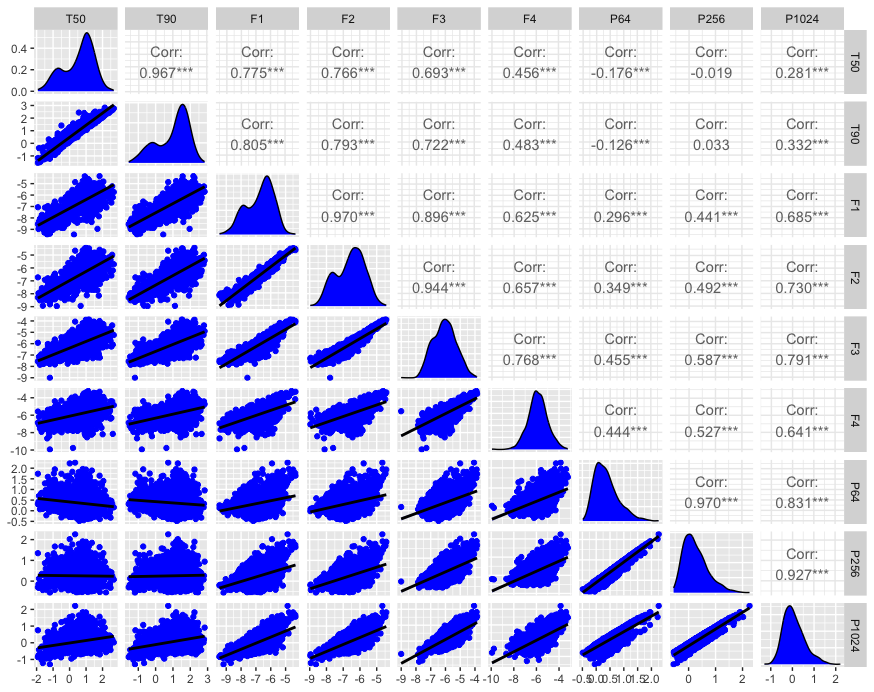}
\caption{Density and correlation plots of the nine features of the 1599 GRB records.}
    \label{corr_matrix_grb}
\end{figure} 
Hence, we employed the MtFA and fitted the data using MtFAD with common number of factors up to $5$ (the maximum number of factors given $p=9$), and with the number of groups up to $10$. The best model chosen by MtFAD via BIC has $K=5$ cluster groups with $q=4$ factors, and the group sizes are given in Table \ref{table:rgb_number_of_observations_in_groups} where there are two large groups containing over $450$ GRBs and the smallest one has less than $200$ records.
\begin{table}[h!]
    \centering
    \caption{Number of observations of the five estimated groups from MtFAD.}  
    \begin{tabular}{cccccc}
        \hline\hline
         Cluster groups & 1 & 2 & 3  & 4 & 5  \\ \hline
                Group size &  454 & 190 & 250 & 492& 213  \\ \hline

    \end{tabular}  \label{table:rgb_number_of_observations_in_groups}
\end{table}
We then applied MtFAD-q to the data without the constraint of the same $q$ for all groups. Table \ref{table: mtfad vs mtfad-q} compares the metrics of evaluating the best fitted models between MtFAD and MtFAD-q, where the the generalized model (MtFAD-q) gives better results with 4 factors detected for groups 1,2 and 5, and 5 factors for groups 3 and 4, with the number of GRBs in each group presented in Table \ref{table:rgb_number_of_observations_in_groups_varied_q}.
\begin{table}[h!]
    \centering
    \caption{Best model results fitted from MtFAD and MtFAD-q.}
    \begin{tabular}{cccccc}
        \hline\hline
    Model        & BIC        & log-likelihood & optimal $K$  & optimal $q$ or $\bm{q}$  \\ \hline
    MtFAD        &  $-3262.7$ & 2549.8         & 5            &  $q=4$                   \\ \hline
    MtFAD-q      & $-3303.1$  & 2606.9         & 5            & $\bm{q}=( 4,4,5,5,4)$    \\ \hline
    \end{tabular}
    \label{table: mtfad vs mtfad-q}
\end{table}

\begin{table}[h!]
    \centering
    \caption{Number of observations of the five estimated groups from MtFAD-q. } 
    \begin{tabular}{cccccc}
        \hline\hline
         Cluster groups & 1 & 2 & 3  & 4 & 5  \\ \hline
                Group size &  428 & 190 & 265 & 503 & 213  \\ \hline

    \end{tabular}
\label{table:rgb_number_of_observations_in_groups_varied_q}
\end{table}

Further, Table \ref{tab:loadings_grb_varied_q} shows the estimated factor loadings ($\hat{\Lambda}_k$) for each of the five components from MtFAD-q. One striking feature across the factor loading values is that the leading factors all have negative weights, implying that the factor is influenced negatively by all the features. The rest of the loadings are predominantly contrasts of positive and negative influences from the features which are distinguished between groups, indicating different kinds of characterizations of the five clusters via the factor model. Additionally, it can be seen from Figure \ref{corr_matrix_grb_colored} that there are possibly one larger cluster of groups 2 and 4 that have lower levels of $T_{50}$ and $T_{90}$ and thus represent the short duration category, and one super clusters with the rest three groups with long duration features, which confirms the traditional clustering results of GRBs, and furthermore, indicates the potential existence of subgroups within the two duration categories in terms of the peak flux and time-integrated fluence features.
\begin{figure}  
    \centering
\includegraphics[width=1\textwidth]{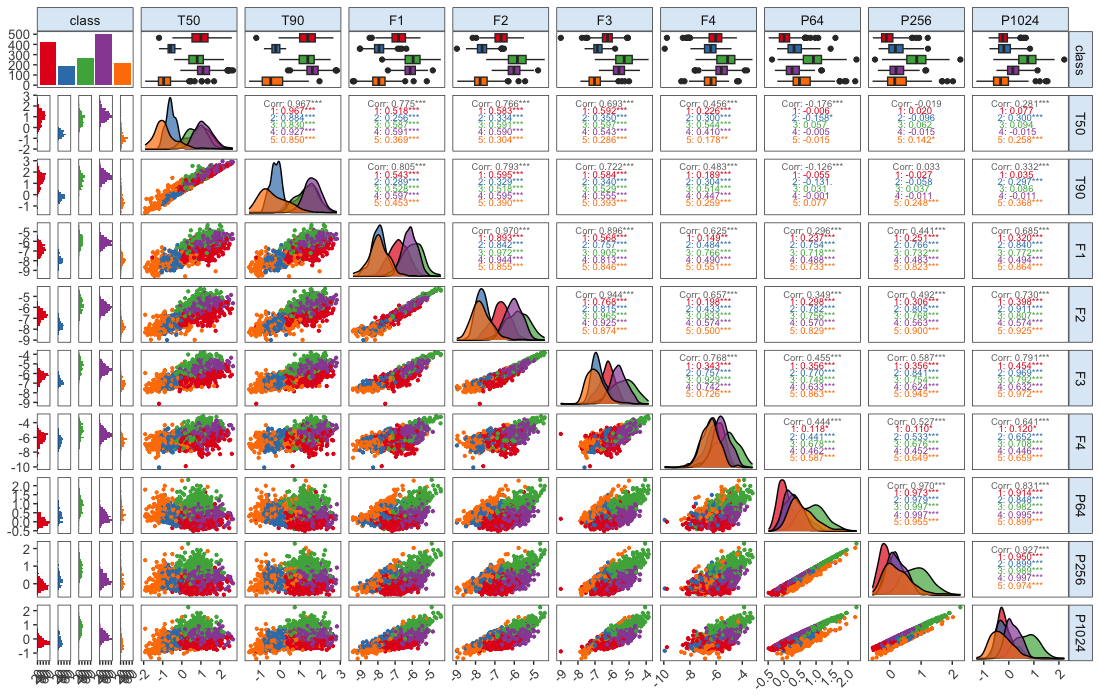}  
    \caption{Density and correlation plots of the nine features of GRB data colored by the estimated clusters from MtFAD-q.}
\label{corr_matrix_grb_colored}
\end{figure}

\section{Conclusion} \label{conclusion}
In this paper, we propose a hybrid likelihood approach for estimating the parameters from the mixture of $t$-factor analyzers with both high computational efficiency and clustering accuracy. 
The method incorporate matrix-free algorithms for likelihood optimization into the EM scheme for mixture components. The developed algorithm called MtFAD significantly accelerates the convergence of estimation results particularly for high-dimensional data compared to the existing method and exhibits remarkable clustering performance. Based on MtFAD, we further generalize the mixture model with varying numbers of factors and develop the algorithm called MtFAD-q, which allows for more flexibility for grouped data characterization. Our methods were applied to cluster the Gamma-ray burst data, identifying five kinds of GRBs that provides new insights into the subpopulations of GRBs and their astrophysical features. The proposed methods and algorithms paves the way to clustering and characterizing more complicated data such as partially recorded measurements with missing values, or time series data with correlated latent factors.

\subsection*{Supplementary Materials}\label{supplementary}
\textbf{Supplement:} Provides additional details on the algorithms, performance evaluations and data application described in this paper. Documented codes for reproducing the results are also included.



\bibliographystyle{asadoi}
\bibliography{references}

\newpage
 \pagebreak
\begin{center}
\textbf{\large Supplement to \\
``A Hybrid Mixture of $t$-Factor Analyzers for Clustering High-dimensional Data" 
}
\end{center}
\setcounter{equation}{0}
\setcounter{figure}{0}
\setcounter{table}{0}
\setcounter{page}{1}
\setcounter{section}{0}
\makeatletter
\renewcommand{\thesection}{S\arabic{section}}
\renewcommand{\thesubsection}{\thesection.\arabic{subsection}}
\renewcommand{\theequation}{S\arabic{equation}}
\renewcommand{\thefigure}{S\arabic{figure}}
\renewcommand{\bibnumfmt}[1]{[S#1]}
\renewcommand{\citenumfont}[1]{S#1}
\renewcommand{\thetable}{S\arabic{table}}

\section{Supplementary Materials for Methodology}
\label{sec:supp-meth}
\subsection{Profile Likelihood Method for Factor Model}
\label{sec:supp-profile}
\citet{daietal2020} propose the following profile method for estimating the covariance parameters in a Gaussian factor model. Given the data matrix
    $ \bm{Y}= (\yy_1, \ldots, \yy_n)^{\top}$, let $ \bar{\bm{Y}} = \bm{Y}^{\top} \bm{1} / n $, 
   $\bm{W}= n^{-1/2}( \bm{Y}- \bm{1}\bar{\bm{Y}}^{\top})^{\top} \Psi^{-1/2} $,  and  $\sss= ( \bm{Y}- \bm{1}\bar{\bm{Y}}^{\top})^{\top}( \bm{Y}- \bm{1}\bar{\bm{Y}}^{\top})/n$.  Suppose the $q$ largest singular values of $\bm{W}$ are $\sqrt{\theta_1}\geq\sqrt{\theta_2}\geq \cdots \geq \sqrt{\theta_q}$, and the corresponding right-singular vectors are the columns of $\bm{V}_q$. Assume a factor model with $q$ latent factors for the data where the covariance matrix $\Sigma = \Lambda\Lambda^\top+\Psi$, then, $\lm$ can be profiled out from the data likelihood function using its maximum value $\hat{\lm} = \psii^{1/2}\bm{V}_q\bm{\Delta}$, and the profile log-likelihood function is given by
\equ{  \ell_p(\psii)=- \frac{n}{2} \left(p \ln(2\pi) + \ln|\psii| + \tr\left[ \psii^{-1} \bm{S} \right] + \sum_{i=1}^q (\ln\theta_i-\theta_i+1)
 \right),\label{eq:profile_likelihood} }
where  $\bm{\Delta}$ is a $q\times q$ diagonal matrix with $i$th  diagonal entry  $ \max(\theta_i-1,0)^{1/2}. $ 
 Here, the parameters to be estimated are  $\hat{\psii}=  \text{argmax}  \hspace{0.2cm}\ell_p(\psii) $ and then $\hat{\lm} = \hat{\psii}^{1/2}\bm{V}_q\bm{\Delta} $.

\subsection{Proof of Lemma \ref{lemma1} } \label{proof_of_lemma}

The condition that $ \Lambda_k^{\top}\Psi_k^{-1}\Lambda_k $  be diagonal imposes $\frac{1}{2}q_k(q_k-1)$ constraints on the parameters \citep{lawleymaxwell1971}.  Hence, for each $k \in \{1, \ldots, K\}$, the number of free parameters in the factor analytic model of the $k$th component is 
\begin{equation}
pq_k+p -\frac{1}{2}q_k(q_k-1) . 
\end{equation}
Suppose $s_k$ is  the difference between the number of parameters for $\Sigma_k$ and the number of free parameters considering the assumption $ \Sigma_k = \Lambda_k\Lambda_k^{\top}+\Psi_k$. Then for each $k\in \{1, \ldots K\}$, 

\begin{align}
  s_k &=  \frac{1}{2}p(p-1) - \left(pq_k + p -\frac{1}{2}q_k(q_k-1) \right)\\
      &=  \frac{1}{2}\left[ (p-q_k)^2- (p+q_k) \right]
\end{align}
This difference represents the reduction in the number of parameters for $\Sigma_k$. For this difference to be positive, each $q_k$ needs to be small enough so that

\begin{align*}
       &\frac{1}{2}\left[ (p-q_k)^2- (p+q_k) \right] > 0  \quad    \forall k \\
       \implies& \quad q_k <  p + (1-\sqrt{1+8p})/2   \quad \forall k \\
       \implies& \max_{k\in \{1 , \ldots, K\} } q_k <  p + (1-\sqrt{1+8p})/2.
\end{align*}

\section{Additional Results for Performance Evaluations}
\begin{figure}[!ht]
  \centering
\mbox{
\subfloat[$K=2$]{
\includegraphics[width=0.475\linewidth]{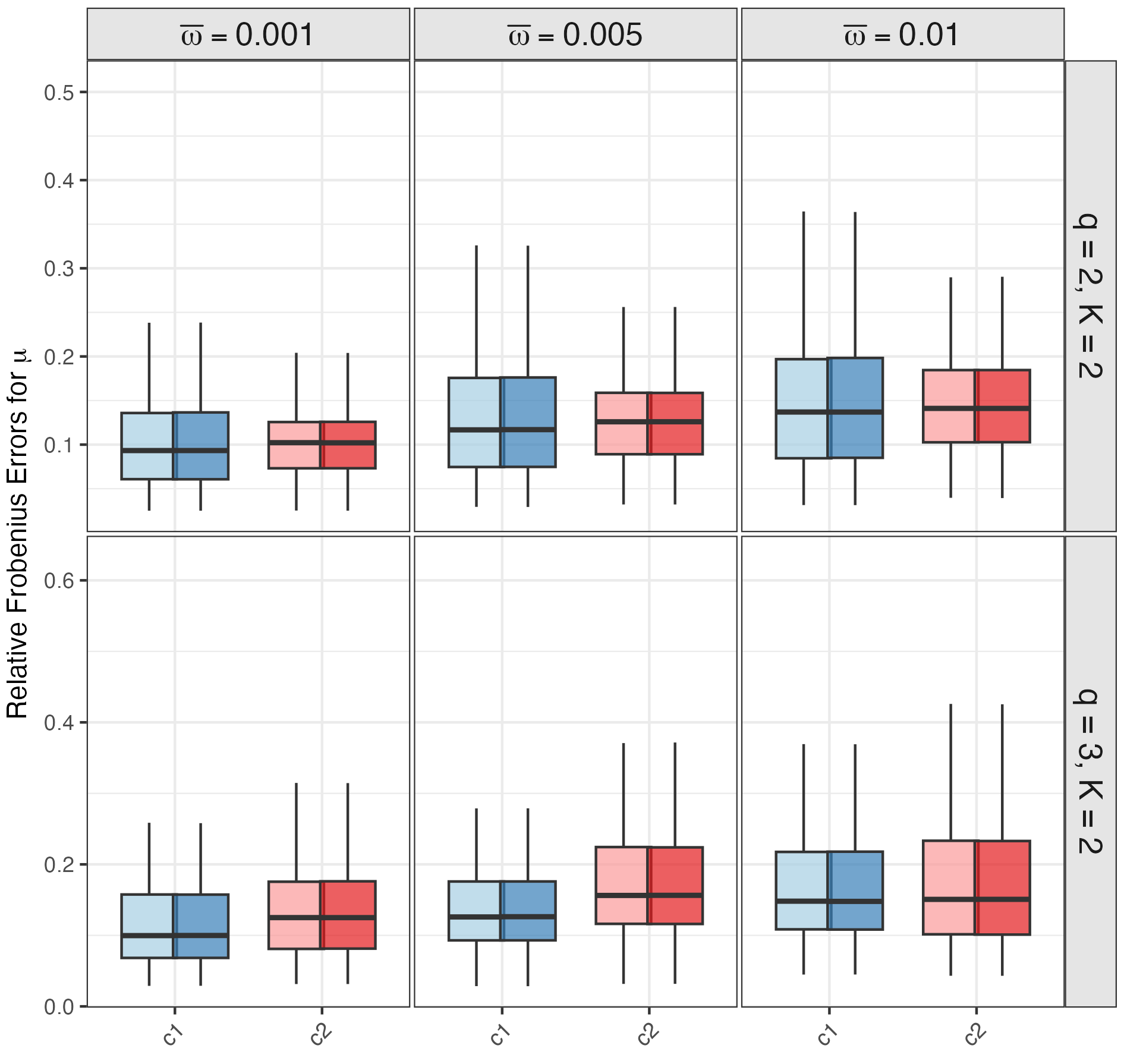}
}
}%
\mbox{
\subfloat[$K=3$]{
\includegraphics[width=0.475\linewidth]{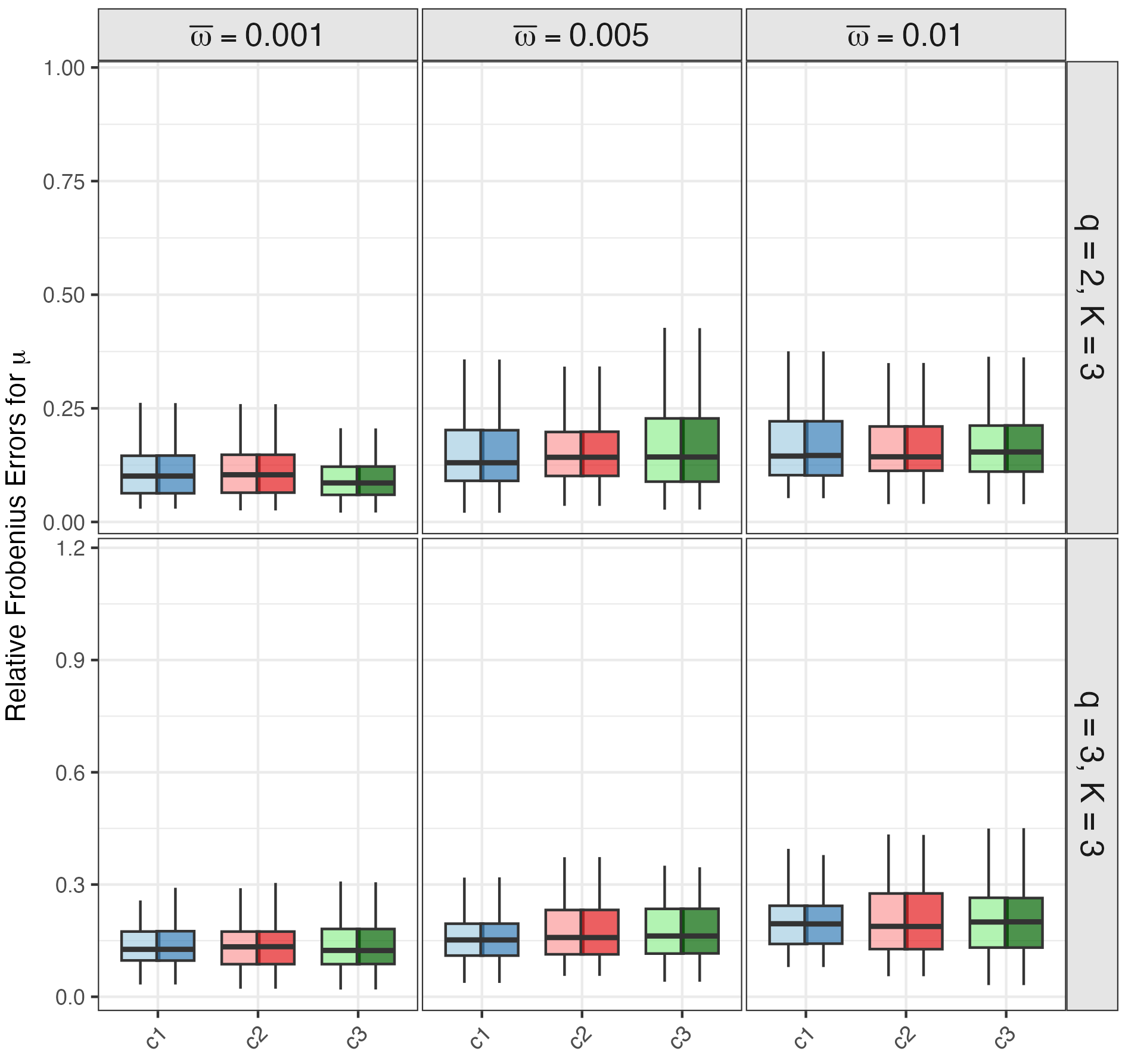}
}
}
\caption{Relative Frobenius distances  of the mean parameter $\Mu_k$ fitted from MtFAD and EMMIXt for $n=300$ and $p=10$ from the respective true paramters. Dark and light shades represent MtFAD and EMMIX respectively. Different colors depict different components.}
    \label{fig:frobenius-error-mu}
\end{figure}

\begin{figure}[h!]
    \centering
    \begin{minipage}[b]{0.51\textwidth}
        \includegraphics[width=\textwidth]{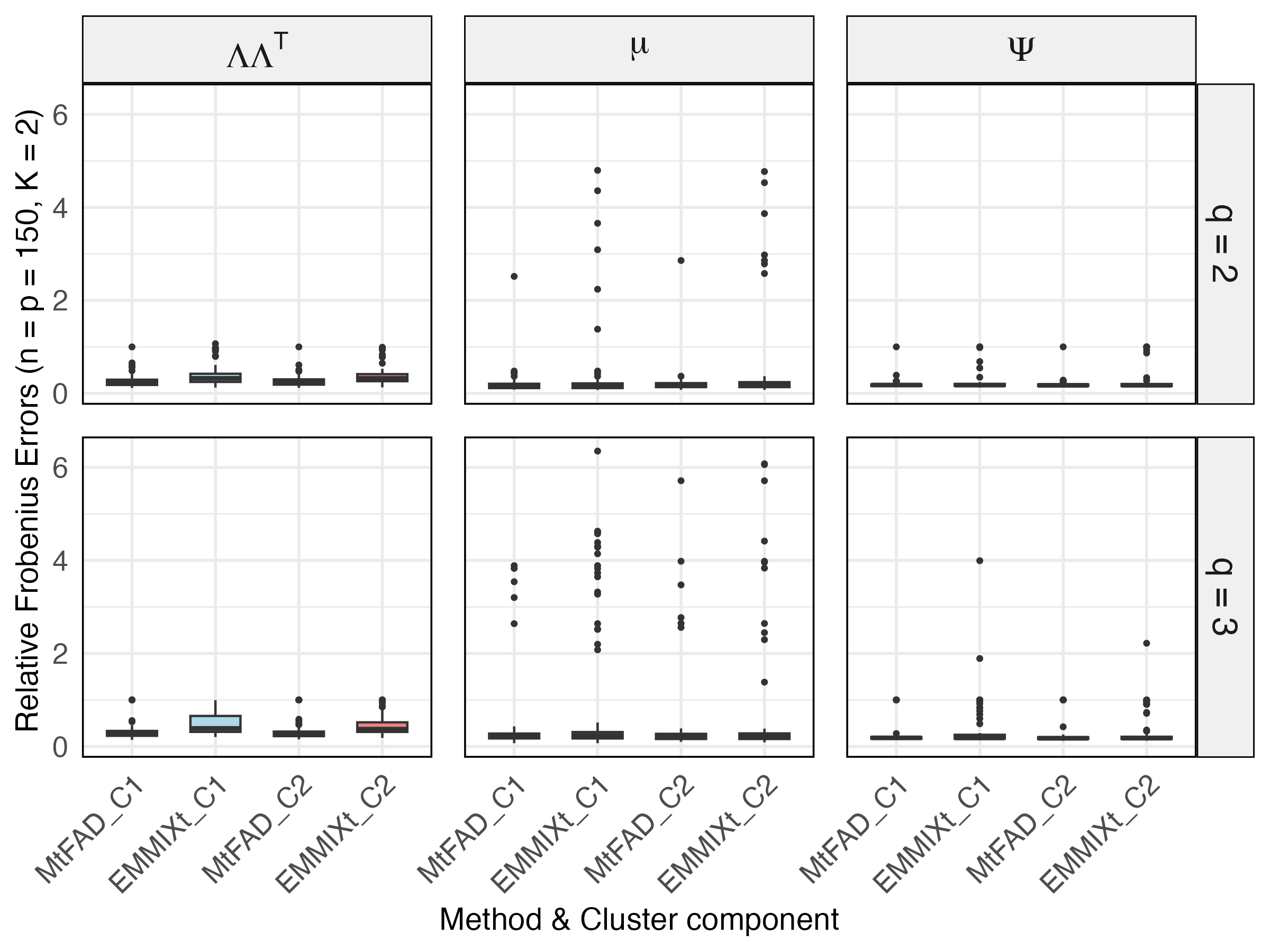}
        \caption*{(a) $K=2$ }
    \end{minipage}
    \hfill
    \begin{minipage}[b]{0.48\textwidth}
        \includegraphics[width=\textwidth]{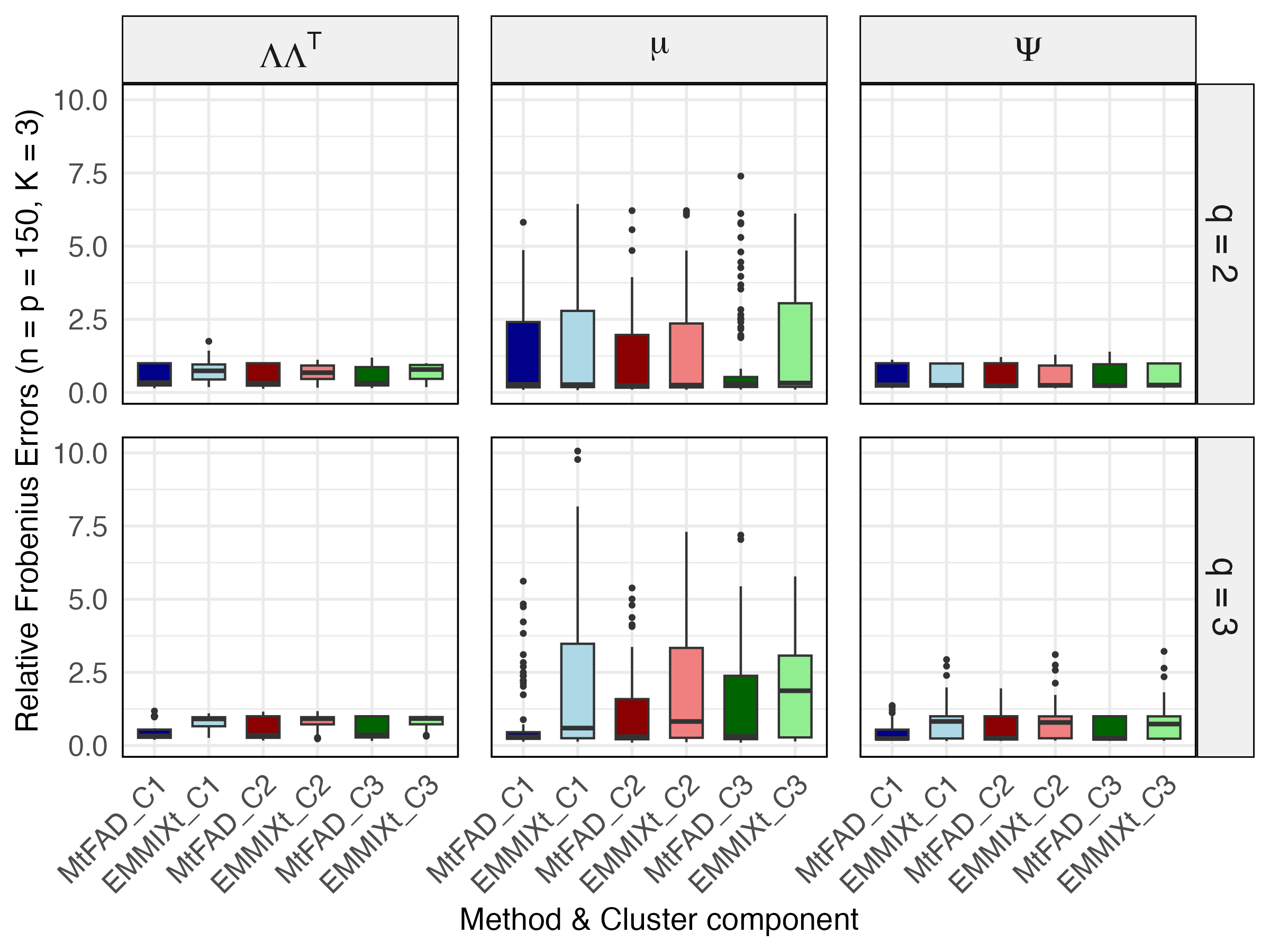}
        \caption*{(b) $K=3$}
    \end{minipage}
    \caption{Relative Frobenius distances of parameters fitted from MtFAD and EMMIXt for  $n=150, p=150$.}
    \label{frobenius-error-p150}
\end{figure}

\section{Additional Results for GRB Data}
\begin{table}[h!]
\centering
\scriptsize 
\setlength{\tabcolsep}{3pt} 
\setlength\extrarowheight{-3pt}
\caption{The estimated factor loadings of the five groups for the GRB data fitted with MtFAD-q. For clarity
of presentation, values in the interval $(-0.1,0.1)$ are suppressed in the table.}
\begin{tabular}{|*{6}{c|}} 
\hline
Group & Factor 1  & Factor 2 & Factor  3 & Factor 4 & Factor 5 \\ \hline
  \hline
&-0.76 & 0.44 & -0.40 &  \\ 
 & -0.76 & 0.51 & -0.39 & \\ 
 & -0.78 &  & 0.16 & -0.46 \\ 
 & -0.91 &  & 0.25 & -0.31 \\ 
 Group 1 & -0.91 & & 0.21 & 0.34 \\ 
 & -0.33 &  &  & 0.27 \\ 
 & -0.41 & -0.88 & -0.16 &  \\ 
  &-0.41 & -0.89 & -0.20 &  \\ 
  &-0.48 & -0.83 & -0.12 & \\ 
  \hline
  \hline
  & -0.19 & 0.97 & -0.10 &  \\ 
  &-0.21 & 0.86 & &  \\ 
  &-0.83 &  & -0.19 & -0.16 \\ 
  &-0.89 & 0.14 & -0.26 & -0.27 \\ 
  Group 2&-0.96 & 0.16 & 0.20 &  \\ 
  &-0.67 & 0.17 & 0.44 & 0.20 \\ 
  &-0.90 & -0.40 & -0.17 &     \\ 
  &-0.94 & -0.33 & &  \\ 
  &-0.99 &  &  &  \\ 
   \hline
  \hline
&-0.47 & 0.85 &  & 0.21 &  \\ 
 & -0.42 & 0.74 & & &  \\ 
  &-0.94 & 0.15 & 0.26 & -0.13 & -0.10 \\ 
  &-0.97 & 0.14 & 0.15 & -0.10 &  \\ 
  Group 3&-0.97 & 0.16 & -0.10 & & 0.11 \\ 
  &-0.88 & 0.17 & -0.42 &  &  \\ 
  &-0.86 & -0.49 &  & 0.14 & \\ 
  &-0.87 & -0.48 &  & 0.12 &  \\ 
  &-0.89 & -0.44 &  &  & \\ 
  \hline
  \hline
&-0.23 & 0.93 & -0.28 & & \\ 
 & -0.24 & 0.89 & -0.18 & &  \\ 
 & -0.69 & 0.51 & 0.17 & -0.40 & -0.19 \\ 
 & -0.78 & 0.50 & 0.23 & -0.25 &  \\ 
 Group 4& -0.83 & 0.45 & 0.29 &  & \\ 
  &-0.61 & 0.34 & 0.20 & 0.52 & -0.39 \\ 
  &-0.96 & -0.27 & -0.10 &  & \\ 
  &-0.95 & -0.28 & -0.10 & &  \\ 
  &-0.95 & -0.28 & &  &  \\ 
   \hline
  \hline
&-0.17 & -0.79 & -0.30 & 0.32 \\ 
 & -0.24 & -0.79 & -0.29 & 0.28 \\ 
 & -0.83 & -0.30 & -0.18 & \\ 
 & -0.91 & -0.23 & -0.21 & -0.14 \\ 
  Group 5& -0.96 & -0.20 & 0.12 & \\ 
  &-0.66 & -0.14 & 0.41 & 0.31 \\ 
  &-0.96 & 0.27 &  & \\ 
  &-0.99 &  &  &  \\ 
  &-0.98 & -0.18 &  & \\ 
   \hline
\end{tabular}
\label{tab:loadings_grb_varied_q}
\end{table}

\end{document}